\theoremstyle{plain}
\newtheorem{theorem}{Theorem}[section]
\newtheorem{proposition}[theorem]{Proposition}
\theoremstyle{definition}
\newtheorem{definition}[theorem]{Definition}
\newtheorem{example}[theorem]{Example}
\theoremstyle{remark}
\newtheorem{remark}[theorem]{Remark}
\title{\textbf{A New $q$-Heisenberg Algebra}}
\author{Julio Cesar Jaramillo Quiceno\\Departamento de Física Universidad Nacional de Colombia\thanks{jcjaramilloq@unal.edu.co}}
\date{}
\begin{document}
\maketitle
\begin{abstract}
This work introduces a novel $q$-$\hbar$ deformation of the Heisenberg algebra, designed to unify and extend several existing $q$-deformed formulations. Starting from the canonical Heisenberg algebra defined by the commutation relation $[\hat{x}, \hat{p}] = i\hbar$ on a Hilbert space \cite{Zettili2009}, we survey a variety of $q$-deformed structures previously proposed by Wess \cite{Wess2000}, Schmüdgen \cite{Schmudgen1999}, Wess--Schwenk \cite{Wess-Schwenk1992}, Gaddis \cite{Jasson-Gaddis2016}, and others. These frameworks involve position, momentum, and auxiliary operators that satisfy nontrivial commutation rules and algebraic relations incorporating deformation parameters. Our new $q$-$\hbar$ Heisenberg algebra $\mathcal{H}_q$ is generated by elements $\hat{x}_\alpha$, $\hat{y}_\lambda$, and $\hat{p}_\beta$ with $\alpha, \lambda, \beta \in \{1,2,3\}$, and is defined through generalized commutation relations parameterized by real constants $n, m, l$ and three dynamical functions $\Psi(q)$, $\Phi(q)$, and $\Pi(q)$ depending on the deformation parameter $q$ and the generators. By selecting appropriate values for these parameters and functions, our framework recovers several well-known algebras as special cases, including the classical Heisenberg algebra for $q = 1$ and $\Psi = 1$, $\Phi = \Pi = 0$, and various $q$-deformed algebras for $q \neq 1$. The algebraic consistency of these generalizations is demonstrated through a series of explicit examples, and the resulting structures are shown to align with quantum planes \cite{Yuri-Manin2010} and enveloping algebras associated with Lie algebra homomorphisms \cite{Reyes2014a}. This construction offers a flexible and unified formalism for studying quantum deformations, with potential applications in quantum mechanics, noncommutative geometry, and quantum group theory.
\end{abstract}
\section{Heisenberg algebras}\label{Heis-alg-sect1}

In this section, we present the definitions and preliminaries on some deformations of the Heisenberg algebra, focusing on the families of Heisenberg algebras of interest in this work.

The usual quantum-mechanical operators in this space will be denoted with a hat. Thus, the element \(\hat{x}\) of the algebra will be identified with the observable for the \textbf{ position} in space, and the element \(\hat{p}\) with the observable conjugate conjugate canonical, typically called the \textbf{ momentum}. The space where this algebra is generated is called the \textit{\textbf{Hilbert space}}, $\mathcal{H}$.

\begin{definition}\cite{Zettili2009} A Hilbert space \( \mathcal{H} \) consists of a set of vectors \( \psi, \phi \) and a set of scalars \( \alpha, \beta \) which satisfy the following four properties 
\begin{enumerate}
    \item[(i)]  \( \mathcal{H} \) is a linear space.
    \item[(ii)]  \( \mathcal{H} \) has a \textit{defined scalar product that is strictly positive}. The scalar product of an element \( \psi \) with another element \( \phi \) is in general a complex number, denoted by \( (\psi, \phi) \), where 
    \[
    (\psi, \phi) = \psi^* \phi \quad \text{(complex number)}.
    \]
    \item[(iii)]  \( \mathcal{H} \) is separable: \( \psi_n \in \mathcal{H} \) (\( n = 1,2,3, \dots \)) such that for every \( \psi \in \mathcal{H} \) and \( \epsilon > 0 \), there exists at least one \( \psi_n \) in the sequence for which 
    \[
    \|\psi - \psi_n\| < \epsilon.
    \]
    \item[(iv)]  \( \mathcal{H} \) is complete: Every Cauchy sequence \( \psi_n \in \mathcal{H} \) converges to an element of \( \mathcal{H} \). That is, for any \( \psi_n \), the relation
    \[
    \lim_{n,m \to \infty} \|\psi_n - \psi_m\| = 0,
    \]
    defines a unique limit \( \psi \in \mathcal{H} \) such that 
    \[
    \lim_{n \to \infty} \|\psi - \psi_n\| = 0.
    \]
\end{enumerate}  
\end{definition}

\begin{definition}\cite{Heisenberg1925, Dirac1925}\label{classic-Heis}
Let \(\hat{x}_{n}\), \(\hat{x}_{m}\) be position operators and \(\hat{p}_{n}\), \(\hat{p}_{m}\) be momentum operators. According to the above, Heisenberg's crucial idea that led to quantum mechanics was to consider the components of these vectors as operators on a Hilbert space \(\mathcal{H}\), satisfying the commutation relations:
\begin{align}
   \label{Heis-1}\hat{x}_{n}\hat{p}_{m} - \hat{p}_{m}\hat{x}_{n} &= i\hbar\delta_{i,j}, \\ 
   \label{Heis-2}\hat{x}_{n}\hat{x}_{m} - \hat{x}_{m}\hat{x}_{n} &= 0, \\ 
   \label{Heis-3}\hat{p}_{n}\hat{p}_{m} - \hat{p}_{m}\hat{p}_{n} &= 0,
\end{align}
for \(m, n = 1, 2, 3\),  and \(\delta_{ij}\) is the Kronecker delta. This set of relations is known as the \textbf{Heisenberg algebra} or the \textit{\textbf{canonical quantization}}.
\end{definition}

\subsection{\texorpdfstring{$q$}{Lg}-deformed Heisenberg Algebra}\label{juwess}

In this section, recall the $q$-deformed Heisenberg algebra following the treatment presented by Wess \cite{Wess2000}, starting from the canonical commutation relations (\ref{Heis-1}), (\ref{Heis-2}) and (\ref{Heis-3}).

\begin{definition}\cite{Wess2000}\label{Wess-def} Let $\hat{x},\hat{p}$, and $\hat{\Lambda}$ be generators. The \textit{\textbf{$q$-deformed Heisenberg algebra}} is defined by the following relations: 
\begin{align} \label{17}
q^{1/2}\hat{x}\hat{p}-q^{-1/2}\hat{p}\hat{x} & = \ i\hat{\Lambda}\hbar ,\\
\label{wess2}\hat{\Lambda} \hat{x} & = \ q^{-1}\hat{x}\hat{\Lambda}, \\
\label{wess3}\hat{\Lambda} \hat{p}  & = \ q\hat{p}\hat{\Lambda}, 
\end{align} 
The elements of this algebra are supposed to be
self adjoint, 
\begin{align}
    \label{wess4}\overline{\hat{p}} & = \ \hat{p},\quad  \overline{\hat{x}} = \hat{x}, \quad \overline{\hat{\Lambda}}=\hat{\Lambda}^{-1},
\end{align}
for $q\in\mathbb{R}$ with $q\neq 0$. 
\end{definition}

If $\hat{x}, \hat{p}$ and $\hat{\Lambda}$ satisfy the relations (\ref{17}), (\ref{wess2}) and (\ref{wess3}), then the $q$-Heisenberg algebra can be written as the free associative algebra
    \begin{equation*}
        \mathcal{H}_{1}^{W} = \frac{\mathbb{C}\left\{ \hat{p}, \hat{x}, \hat{\Lambda} \right\} }{ \langle q^{\frac{1}{2}} \hat{x}\hat{p} - q^{-\frac{1}{2}} \hat{p}\hat{x} - i\hat{\Lambda}\hbar, \ \hat{\Lambda} \hat{p} - q \hat{p} \hat{\Lambda}, \ \hat{\Lambda} \hat{x} - q^{-1} \hat{x} \hat{\Lambda}, \overline{\hat{p}}=\hat{p},\overline{\hat{x}} = \hat{x}, \overline{\hat{\Lambda}}=\hat{\Lambda}^{-1}\mid q\in \mathbb{R}, \ q\neq 0 \rangle}.
    \end{equation*}

\begin{proposition}\label{Rings-wess}
The iterated Ore extensions $R=\mathbb{R}[\hat{\Lambda}][\hat{p};\sigma_{\hat{p}},\delta_{\hat{p}}][\hat{x};\sigma_{\hat{x}},\delta_{\hat{x}}]$ can be written of the form
\begin{align*}
    \sigma_{\hat{x}}(\hat{\Lambda}) & = \ q\hat{\Lambda}, \quad \delta_{\hat{x}}(\hat{\Lambda})=0, \quad \sigma_{\hat{x}}(\hat{p})=q^{-1}\hat{p}, \quad \sigma_{\hat{p}}(\hat{\Lambda})=q^{-1}\hat{\Lambda}, \quad \delta_{\hat{p}}(\hat{\Lambda})=0.
\end{align*}
\end{proposition}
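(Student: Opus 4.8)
The plan is to verify directly that the three defining relations of the Wess algebra $\mathcal{H}_1^{W}$ encode exactly the twisting endomorphisms $\sigma$ and $\sigma$-derivations $\delta$ claimed in the statement, and that these data assemble into a well-defined iterated Ore extension. Recall that in an Ore extension $R[\hat{y};\sigma_{\hat{y}},\delta_{\hat{y}}]$ the commutation of the new variable past a coefficient $a\in R$ is governed by $\hat{y}a = \sigma_{\hat{y}}(a)\hat{y} + \delta_{\hat{y}}(a)$. The whole argument therefore reduces to reading off $\sigma$ and $\delta$ from the relations and then checking, at each stage, the two structural axioms that legitimise the construction: that $\sigma_{\hat{y}}$ is a ring endomorphism and that $\delta_{\hat{y}}$ is a $\sigma_{\hat{y}}$-derivation.

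First I would fix the base ring $\mathbb{R}[\hat{\Lambda}]$, the ordinary commutative polynomial ring in the self-adjoint generator $\hat{\Lambda}$. To adjoin $\hat{p}$, I rewrite \eqref{wess3} as $\hat{p}\hat{\Lambda} = q^{-1}\hat{\Lambda}\hat{p}$ and compare with $\hat{p}\hat{\Lambda} = \sigma_{\hat{p}}(\hat{\Lambda})\hat{p} + \delta_{\hat{p}}(\hat{\Lambda})$, which forces $\sigma_{\hat{p}}(\hat{\Lambda}) = q^{-1}\hat{\Lambda}$ and $\delta_{\hat{p}}(\hat{\Lambda}) = 0$, exactly as stated; since $\sigma_{\hat{p}}$ is the scaling $\hat{\Lambda}\mapsto q^{-1}\hat{\Lambda}$ it extends to an algebra endomorphism of $\mathbb{R}[\hat{\Lambda}]$, and the zero map is trivially a $\sigma_{\hat{p}}$-derivation, so $S:=\mathbb{R}[\hat{\Lambda}][\hat{p};\sigma_{\hat{p}},\delta_{\hat{p}}]$ is well defined. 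To adjoin $\hat{x}$, relation \eqref{wess2} becomes $\hat{x}\hat{\Lambda} = q\hat{\Lambda}\hat{x}$, giving $\sigma_{\hat{x}}(\hat{\Lambda}) = q\hat{\Lambda}$ and $\delta_{\hat{x}}(\hat{\Lambda}) = 0$, while \eqref{17} rearranges to $\hat{x}\hat{p} = q^{-1}\hat{p}\hat{x} + iq^{-1/2}\hbar\hat{\Lambda}$, giving $\sigma_{\hat{x}}(\hat{p}) = q^{-1}\hat{p}$ together with the value $\delta_{\hat{x}}(\hat{p}) = iq^{-1/2}\hbar\hat{\Lambda}$ left implicit in the statement. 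This reproduces precisely the asserted data.

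The main obstacle is the final consistency step, because $\sigma_{\hat{x}}$ and $\delta_{\hat{x}}$ must now be well defined on all of $S$, not merely on the generators $\hat{\Lambda}$ and $\hat{p}$ taken separately; in particular they must respect the single nontrivial relation $\hat{p}\hat{\Lambda} = q^{-1}\hat{\Lambda}\hat{p}$ already built into $S$. Concretely I would verify the endomorphism identity $\sigma_{\hat{x}}(\hat{p}\hat{\Lambda}) = \sigma_{\hat{x}}(\hat{p})\sigma_{\hat{x}}(\hat{\Lambda})$ and the Leibniz identity $\delta_{\hat{x}}(\hat{p}\hat{\Lambda}) = \sigma_{\hat{x}}(\hat{p})\delta_{\hat{x}}(\hat{\Lambda}) + \delta_{\hat{x}}(\hat{p})\hat{\Lambda}$ by evaluating both sides after substituting $\hat{p}\hat{\Lambda} = q^{-1}\hat{\Lambda}\hat{p}$; one finds that the competing powers of $q$ cancel in exactly the required way, so both computations close. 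Once these compatibility checks pass, the iterated Ore construction is legitimate, the ordered monomials $\hat{\Lambda}^{a}\hat{p}^{b}\hat{x}^{c}$ furnish a PBW-type basis, and the resulting ring $R$ is isomorphic to $\mathcal{H}_1^{W}$, completing the proof.
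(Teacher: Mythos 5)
Your proposal is correct and follows essentially the same route as the paper: rewrite each defining relation in the form $\hat{y}a=\sigma_{\hat{y}}(a)\hat{y}+\delta_{\hat{y}}(a)$ and read off $\sigma_{\hat{p}}(\hat{\Lambda})=q^{-1}\hat{\Lambda}$, $\sigma_{\hat{x}}(\hat{\Lambda})=q\hat{\Lambda}$, $\sigma_{\hat{x}}(\hat{p})=q^{-1}\hat{p}$ with the corresponding $\delta$'s (your $\delta_{\hat{x}}(\hat{p})=iq^{-1/2}\hbar\hat{\Lambda}$ is the correct value; the paper's proof carries a stray extra factor of $\hbar$ there). Your additional verification that $\sigma_{\hat{x}}$ is an endomorphism of $\mathbb{R}[\hat{\Lambda}][\hat{p};\sigma_{\hat{p}},\delta_{\hat{p}}]$ and that $\delta_{\hat{x}}$ satisfies the $\sigma_{\hat{x}}$-Leibniz rule on the relation $\hat{p}\hat{\Lambda}=q^{-1}\hat{\Lambda}\hat{p}$ is a genuine strengthening that the paper omits, since without it the iterated Ore extension is not known to be well defined.
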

\begin{proof}
   First, we write 
   \begin{equation*}
       \hat{p}\hat{\Lambda}=\sigma_{\hat{p}}(\Lambda)\hat{p}+\delta_{\hat{p}}(\hat{\Lambda}), 
   \end{equation*}
   Compared with (\ref{wess3}) we can say that
   \begin{equation*}
    \sigma_{\hat{p}}(\hat{\Lambda})=q^{-1}\hat{\Lambda}, \quad \delta_{\hat{p}}(\hat{\Lambda})=0.
   \end{equation*}
   Now for (\ref{wess2}) we can proceed in the same form as above
   \begin{equation*}
       \hat{x}\hat{\Lambda}=\sigma_{\hat{x}}(\hat{\Lambda})\hat{x}+\delta_{\hat{x}}(\hat{\Lambda}),
   \end{equation*}
   we can infer that 
   \begin{equation*}
       \sigma_{\hat{x}}(\hat{\Lambda})=q\hat{\Lambda}, \quad \delta_{\hat{x}}(\hat{\Lambda})=0,
   \end{equation*}
   and for (\ref{17})
   \begin{equation*}
       \hat{x}\hat{p} = \sigma_{x}(\hat{p})\hat{x}+\delta_{\hat{x}}(\hat{p}),
   \end{equation*}
   obtaining that
   \begin{equation*}
     \sigma_{\hat{x}}(\hat{p})=q^{-1}\hat{p}, \quad \delta_{\hat{x}}(\hat{p})=i\hbar q^{-1/2}\hbar\hat{\Lambda}.
   \end{equation*}  
\end{proof}

\subsection{\texorpdfstring{$q$}{Lg}-Heisenberg algebra by Schmudgen}\label{rem2}

\begin{definition}\cite{Schmudgen1999}\label{schmudgena}
For a positive real number $q\neq 1$, let $\mathcal{H}^{SCH}$ denote the complex unital algebra with four generators $\hat{p}, \hat{u}, \hat{x}$ and $\hat{u}^{-1}$ subject to the defining relations
\begin{align}
 \label{22}\hat{u}\hat{p} & = \ q\hat{p}\hat{u},\\
 \label{23}\hat{u}\hat{x} & = \ q^{-1}\hat{x}\hat{u},\\
 \label{24}\hat{u}\hat{u}^{-1} & =  \ \hat{u}^{-1}\hat{u}  =  1,\\
 \label{25}\hat{p}\hat{x}-q\hat{x}\hat{p} & = \ i(q^{3/2}-q^{-1/2})\hat{u}\hbar,\\
 \label{26}\hat{x}\hat{p}-q\hat{p}\hat{x} & = \ -i(q^{3/2}-q^{-1/2})\hat{u}^{-1}\hbar.
 \end{align}

\end{definition}
     If $\mathbb{C}\left\{ \hat{u}, \hat{p}, \hat{x}, \hat{u}^{-1}\right\}$ is the free algebra on five operators  $\hat{u}, \hat{p}, \hat{x}$ and $\hat{u}^{-1}$, and $\langle \hat{p}\hat{x} - q\hat{x}\hat{p} - i(q^{3/2}-q^{-1/2})\hat{u}\hbar, \ \hat{x} \hat{p} - q \hat{x} \hat{p}+i(q^{3/2}-q^{-1/2})\hat{u}^{-1}\hbar, \ \hat{u} \hat{u}^{-1} - \hat{u}^{-1} \hat{u}-1, \ \hat{u}\hat{x}-q^{-1}\hat{x}\hat{u}, \ \hat{u}\hat{p}-q\hat{p}\hat{u}, \mid q\in \mathbb{R}, \ q\neq 0 \rangle$ denotes the ideal of $\mathbb{C}\left\{ \hat{u}, \hat{p}, \hat{x}, \hat{x}^{-1}, \hat{u}^{-1}\right\}$ generated by (\ref{22}), (\ref{23}), (\ref{24}), (\ref{25}) and (\ref{26}) we are declaring that
     \begin{tiny}
      \begin{multline*}
          \mathcal{H}^{SCH}=\\
          \frac{\mathbb{C}\left\{ \hat{u}, \hat{p}, \hat{x}, \hat{u}^{-1}\right\}}{\langle \hat{p}\hat{x} - q\hat{x}\hat{p} - i(q^{3/2}-q^{-1/2})\hat{u}\hbar, \ \hat{x} \hat{p} - q \hat{x} \hat{p}+i(q^{3/2}-q^{-1/2})\hat{u}^{-1}\hbar, \ \hat{u} \hat{u}^{-1} - \hat{u}^{-1} \hat{u}-1, \ \hat{u}\hat{x}-q^{-1}\hat{x}\hat{u}, \ \hat{u}\hat{p}-q\hat{p}\hat{u}, \mid q\in \mathbb{R}, \ q\neq 0 \rangle}.
      \end{multline*}
      \end{tiny}
\begin{proposition}[{\cite[Eq. 2]{Schmudgen1999}}]
An equivalent set of relations is obtained if \rm(\ref{25}) and  \rm(\ref{26}) are replaced by
\begin{align}
   \label{27}\hat{p}\hat{x} & = \ iq^{1/2}\hat{u}-iq^{-1/2}\hat{u}^{-1}\hbar,\\
   \label{28}\hat{x}\hat{p} & = \ iq^{-1/2}\hat{u}^{-1}-iq^{1/2}\hat{u}\hbar.
\end{align}
\end{proposition}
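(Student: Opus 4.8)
The plan is to read the two defining relations (\ref{25}) and (\ref{26}) as a linear system of two equations in the two unknowns $\hat{p}\hat{x}$ and $\hat{x}\hat{p}$, viewed as elements of the algebra, with coefficients that are mere scalars from the ground field. Writing $A := \hat{p}\hat{x}$ and $B := \hat{x}\hat{p}$, relations (\ref{25}) and (\ref{26}) become
\begin{align*}
A - qB &= i(q^{3/2}-q^{-1/2})\hbar\,\hat{u},\\
B - qA &= -\,i(q^{3/2}-q^{-1/2})\hbar\,\hat{u}^{-1}.
\end{align*}
Because the entries multiplying $A$ and $B$ are scalars, the non-commutativity of the generators is irrelevant to this step: ordinary Gaussian elimination applies verbatim, the right-hand sides being treated as fixed algebra elements.

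First I would solve the system. Its coefficient matrix is $\left(\begin{smallmatrix} 1 & -q \\ -q & 1\end{smallmatrix}\right)$, with determinant $1-q^{2}$. Since $q$ is a positive real number with $q\neq 1$, we have $q\neq\pm 1$, so $1-q^{2}\neq 0$ and the system is invertible. Eliminating $B$ (and then $A$) expresses each product individually as a scalar combination of $\hat{u}$ and $\hat{u}^{-1}$. The one simplification worth isolating is the scalar identity $q^{3/2}-q^{-1/2}=-q^{-1/2}(1-q^{2})$, which cancels the determinant appearing in the denominator and collapses every prefactor to a clean half-integer power $q^{\pm 1/2}$. Carrying this through yields precisely the closed forms (\ref{27}) and (\ref{28}).

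For the converse — and hence for genuine equivalence — I would substitute (\ref{27}) and (\ref{28}) into the combinations $\hat{p}\hat{x}-q\,\hat{x}\hat{p}$ and $\hat{x}\hat{p}-q\,\hat{p}\hat{x}$ and check the cancellations: in the first combination the $\hat{u}^{-1}$ terms cancel and the $\hat{u}$ terms survive, reproducing the right-hand side of (\ref{25}); in the second the $\hat{u}$ terms cancel and the $\hat{u}^{-1}$ terms survive, reproducing (\ref{26}). Since each pair of relations is obtained from the other by an invertible scalar linear transformation, the two pairs generate the same two-sided ideal, so the defining quotient $\mathcal{H}^{SCH}$ is unchanged.

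I do not anticipate a substantive obstacle here: the entire argument is elementary linear algebra carried out over the algebra, and the only hypothesis actually invoked is $q\neq 1$, which is exactly what makes the determinant nonzero. In particular the relations (\ref{22})–(\ref{24}) governing $\hat{u}$ and $\hat{u}^{-1}$ are never used, because at every stage we only take $\mathbb{C}$-linear combinations of the two equations. The sole point demanding care is the bookkeeping of the half-integer powers of $q$ and the signs, where the factorization $q^{3/2}-q^{-1/2}=-q^{-1/2}(1-q^{2})$ must be applied consistently in both the forward solve and the converse verification.
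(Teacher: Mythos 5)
Your argument is essentially the paper's own: the paper eliminates one of the two products by multiplying (\ref{25}) by $q^{-1}$ (respectively (\ref{26}) by $q$) and adding, then divides by $q^{-1}-q$, which is exactly your Gaussian elimination on the $2\times 2$ scalar system with determinant $1-q^{2}$, up to an overall factor of $q^{-1}$. Your explicit converse check --- substituting (\ref{27})--(\ref{28}) back into $\hat{p}\hat{x}-q\hat{x}\hat{p}$ and $\hat{x}\hat{p}-q\hat{p}\hat{x}$ and observing that the two pairs of relations generate the same ideal --- is a worthwhile addition, since the paper only runs the forward direction and leaves the equivalence to the implicit invertibility of the scalar transformation.

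The one point you should not leave to assertion is the claim that the elimination ``yields precisely'' (\ref{27}). Carrying the solve to the end gives $\hat{p}\hat{x}=i\bigl(q^{1/2}\hat{u}^{-1}-q^{-1/2}\hat{u}\bigr)\hbar$, which is also the formula the paper's own proof terminates in, whereas (\ref{27}) as displayed reads $\hat{p}\hat{x}=iq^{1/2}\hat{u}-iq^{-1/2}\hat{u}^{-1}\hbar$, i.e.\ with $\hat{u}$ and $\hat{u}^{-1}$ interchanged; the expression for $\hat{x}\hat{p}$ does come out as in (\ref{28}). So either the displayed statement of (\ref{27}) carries a typo or the computation does, and your write-up, by not executing the final step, silently endorses both incompatible versions. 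Record the actual output of the elimination; the method itself is sound and the hypothesis $q>0$, $q\neq 1$ is indeed all that is needed for $1-q^{2}\neq 0$.
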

\begin{proof}
First, multiply equation (\ref{25}) by $q^{-1}$ in order to obtain
\begin{equation}\label{bonito}
        q^{-1}\hat{p}\hat{x}-\hat{x}\hat{p} =iq^{1/2}\hat{u}\hbar-iq^{-3/2}\hat{u}\hbar-iq^{3/2}\hat{u}^{-1}\hbar+iq^{-1/2}\hat{u}^{-1}\hbar,
\end{equation}

adding (\ref{bonito}) and (\ref{26}) we get that
\begin{align*}
    q^{-1}\hat{p}\hat{x}-q\hat{p}\hat{x} & = \ iq^{1/2}\hat{u}\hbar-iq^{-3/2}\hat{u}-iq^{3/2}\hat{u}^{-1}\hbar+iq^{-1/2}\hat{u}^{-1}\hbar,\\
    \hat{p}\hat{x} & = \ i\left(\frac{q^{1/2}-q^{-3/2}}{q^{-1}-q}\right)\hat{u}\hbar+i\left(\frac{q^{-1/2}-q^{3/2}}{q^{-1}-q}\right)\hat{u}^{-1}\hbar,
\end{align*}

A short calculation going through the steps
\begin{equation*}
  \left(\frac{q^{1/2}-q^{-3/2}}{q^{-1}-q}\right)=-q^{-1/2}, \quad  \left(\frac{q^{-1/2}-q^{3/2}}{q^{-1}-q}\right)=q^{1/2},
\end{equation*}

leads to the following representation for $\hat{p}\hat{x}$:
\begin{equation*}
    \hat{p}\hat{x}=-iq^{-1/2}\hat{u}\hbar+iq^{1/2}\hat{u}^{-1}\hbar .
\end{equation*}

Similar arguments apply to obtain (\ref{28}). Multiply equation (\ref{26}) by $q$ in order to obtain
\begin{equation}\label{feo}
   q^{-1}\hat{x}\hat{p}-\hat{p}\hat{x} = -iq^{1/2}\hat{u}^{-1}\hbar+iq^{-3/2}\hat{u}^{-1}\hbar,
\end{equation}

adding (\ref{feo}) and (\ref{25}) result
\begin{align*}
    q^{-1}\hat{x}\hat{p}-q\hat{x}\hat{p} & = \ iq^{3/2}\hat{u}\hbar-iq^{1/2}\hat{u}^{-1}\hbar+iq^{-3/2}\hat{u}^{-1}\hbar-iq^{-1/2}\hat{u}\hbar,\\
    \hat{x}\hat{p} & = \ i\left(\frac{q^{3/2}-q^{-1/2}}{q^{-1}-q}\right)\hat{u}\hbar+i\left(\frac{q^{-3/2}-q^{1/2}}{q^{-1}-q}\right)\hat{u}^{-1}\hbar,
\end{align*}

a short calculation going through the steps
\begin{equation*}
  \left(\frac{q^{3/2}-q^{-1/2}}{q^{-1}-q}\right)=-q^{1/2}, \quad  \left(\frac{q^{-3/2}-q^{1/2}}{q^{-1}-q}\right)=q^{-1/2},
\end{equation*}
finally we obtain
\begin{equation*}
    \hat{x}\hat{p}=-iq^{1/2}\hat{u}\hbar+iq^{-1/2}\hat{u}^{-1}\hbar.
\end{equation*}
\end{proof}

\begin{remark}
From (\ref{22}), (\ref{23}), and (\ref{24}) , (\ref{27}) and (\ref{28}) it follows that the set of elements $\left\lbrace\hat{p}^{r}\hat{u}^{n}, \hat{x}^{s}\hat{u}^{n}\vert r, s,n\in\mathbb{N}\right\rbrace$ is a vector space basis of $\mathcal{A}(q)$.
\end{remark}

\subsection{\texorpdfstring{$q$}{Lg}-Heisenberg algebra by Wess - Schwenk}

\begin{definition}[{\cite[p. 1]{Wess-Zumino1990, Chen-Lu-2021}}]
A \textit{\textbf{quantum plane}} is a non commutative algebra generated by $\hat{x}, \hat{p}$ with defining relation 
\begin{equation}\label{quantum-plane}
  \hat{p}\hat{x}=q\hat{x}\hat{p},
\end{equation}

for some fixed $q\neq 0$  
\end{definition}

\begin{definition}[{\cite[p. 1]{Wess-Schwenk1992}}]\label{Wess-Schwenk1992a}
A \textit{\textbf{$q$-Heisenberg algebra over the quantum plane}} is defined by $\hat{x}, \hat{p}$ and $\overline{\hat{x}}$ which is subject to following relations
\begin{align}\label{Heis-Schw}
    \hat{p}\hat{x}-q\hat{x}\hat{p} & = \ -i\hbar, \\
    \label{Heis-Schw1}\hat{p}\overline{\hat{x}}-q^{-1}\overline{\hat{x}}\hat{p} & = \ -iq^{-1}\hbar ,\\ 
    \label{Heis-Schw2}\hat{x}\overline{\hat{x}} & = \ q\overline{\hat{x}}\hat{x}.
\end{align}
\end{definition}
     The $q$-Heisenberg algebra proposed by Schwenk - Wess, defined as a free associative algebra, can be expressed as follows
\begin{equation*}
        \mathcal{H}^{WS}=\frac{\mathbb{C}\left\{ \hat{p}, \hat{x}, \Lambda \right\} }{ \langle \hat{p}\hat{x} - q\hat{x}\hat{p} - i\hbar, \ \hat{p} \overline{\hat{x}} - q^{-1}\overline{\hat{x}} \hat{p} -iq^{-1}\hbar , \ \hat{x} \overline{\hat{x}} - q \overline{\hat{x}} \hat{x} \mid q\in \mathbb{R}, \ q\neq 0 \rangle}.
    \end{equation*}
\begin{proposition}
 The iterated Ore extensions for the skew ring associated to set of relations \rm(\ref{Heis-Schw}), \rm(\ref{Heis-Schw1}) and \rm(\ref{Heis-2}), $R=\mathbb{C}[\hat{x}][\overline{\hat{x}};\sigma_{\overline{\hat{x}}},\delta_{\overline{\hat{x}}}][\hat{p};\sigma_{p},\delta_{p}]$, are given by
 \begin{align}
    \label{s1}\sigma_{\overline{\hat{x}}}(\hat{x}) & = \ q^{-1}\hat{x}, \quad \delta_{\overline{\hat{x}}}(\hat{x})=0,\\ \label{s2}\sigma_{\hat{p}}(\overline{\hat{x}}) &= \ q^{-1}\overline{\hat{x}},\quad  \delta_{\hat{p}}(\overline{\hat{x}})  = -i\hbar q^{-1},\\
    \label{s3}\sigma_{\hat{p}}(\hat{x}) & = \ q\hat{x}, \quad \delta_{\hat{p}}(\hat{x})=-i\hbar .
 \end{align}
\end{proposition}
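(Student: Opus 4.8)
The plan is to follow the same strategy used in the proof of Proposition~\ref{Rings-wess}: compare each defining relation against the generic Ore multiplication rule $y\,r = \sigma_{y}(r)\,y + \delta_{y}(r)$ and read off the endomorphisms $\sigma$ and $\sigma$-derivations $\delta$ directly, building the algebra up one layer at a time in the order $\mathbb{C}[\hat{x}] \subset \mathbb{C}[\hat{x}][\overline{\hat{x}};\sigma_{\overline{\hat{x}}},\delta_{\overline{\hat{x}}}] \subset R$. The first task is therefore purely a matter of rewriting each relation so that the ``new'' generator stands on the left and then matching coefficients.

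Concretely, for the first extension I would rewrite relation (\ref{Heis-Schw2}) as $\overline{\hat{x}}\hat{x} = q^{-1}\hat{x}\overline{\hat{x}}$ and compare with $\overline{\hat{x}}\hat{x} = \sigma_{\overline{\hat{x}}}(\hat{x})\overline{\hat{x}} + \delta_{\overline{\hat{x}}}(\hat{x})$, which forces $\sigma_{\overline{\hat{x}}}(\hat{x}) = q^{-1}\hat{x}$ and $\delta_{\overline{\hat{x}}}(\hat{x}) = 0$, that is, (\ref{s1}). For the top extension by $\hat{p}$, rewriting (\ref{Heis-Schw1}) as $\hat{p}\overline{\hat{x}} = q^{-1}\overline{\hat{x}}\hat{p} - iq^{-1}\hbar$ and matching against $\hat{p}\overline{\hat{x}} = \sigma_{\hat{p}}(\overline{\hat{x}})\hat{p} + \delta_{\hat{p}}(\overline{\hat{x}})$ yields (\ref{s2}), while rewriting (\ref{Heis-Schw}) as $\hat{p}\hat{x} = q\hat{x}\hat{p} - i\hbar$ yields (\ref{s3}) in exactly the same fashion. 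Since $\sigma_{\overline{\hat{x}}}$ and $\sigma_{\hat{p}}$ are prescribed on generators and extended multiplicatively, the reading-off itself is routine.

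The step I expect to carry the real content is not this bookkeeping but the verification that the data so obtained genuinely define an iterated Ore extension. Two points must be checked: that $\sigma_{\hat{p}}$ extends to an algebra endomorphism of the intermediate ring $B = \mathbb{C}[\hat{x}][\overline{\hat{x}};\sigma_{\overline{\hat{x}}},\delta_{\overline{\hat{x}}}]$, and that $\delta_{\hat{p}}$ is a genuine $\sigma_{\hat{p}}$-derivation on $B$. The nontrivial requirement is compatibility with the relation $\hat{x}\overline{\hat{x}} = q\overline{\hat{x}}\hat{x}$ defining $B$. For the endomorphism condition I would check $\sigma_{\hat{p}}(\hat{x})\sigma_{\hat{p}}(\overline{\hat{x}}) = (q\hat{x})(q^{-1}\overline{\hat{x}}) = \hat{x}\overline{\hat{x}}$ against $\sigma_{\hat{p}}(q\overline{\hat{x}}\hat{x}) = q(q^{-1}\overline{\hat{x}})(q\hat{x}) = q\overline{\hat{x}}\hat{x}$, which agree precisely because of (\ref{Heis-Schw2}); the cancellation of the factors $q$ and $q^{-1}$ is what makes this close. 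The decisive check is then the single overlap $\hat{p}\,\hat{x}\,\overline{\hat{x}}$: I would reduce it once by first moving $\hat{p}$ past $\hat{x}$ via (\ref{Heis-Schw}) and then past $\overline{\hat{x}}$ via (\ref{Heis-Schw1}), and a second time by first applying (\ref{Heis-Schw2}) and then the $\hat{p}$-relations. Both routes collapse to the common normal form $\hat{x}\overline{\hat{x}}\hat{p} - i\hbar\hat{x} - i\hbar\overline{\hat{x}}$, so the only ambiguity in the rewriting system resolves consistently. Once this overlap closes, there is no diamond-lemma obstruction, the $\sigma$–$\delta$ data are compatible across the layers, and the universal property of the Ore construction certifies that $R$ is a well-defined iterated Ore extension with the parameters (\ref{s1})–(\ref{s3}).
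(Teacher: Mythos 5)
Your reading-off of the $\sigma$'s and $\delta$'s is exactly the paper's argument: the paper, as in Proposition~\ref{Rings-wess}, simply writes each relation in the form $y\,r=\sigma_{y}(r)\,y+\delta_{y}(r)$ and matches coefficients, and your first two paragraphs reproduce that computation correctly (indeed more carefully, since the paper's display for \eqref{Heis-Schw} drops the factor of $\hat{p}$ after $\sigma_{\hat{p}}(\hat{x})$, and you correctly read the proposition's citation of \eqref{Heis-2} as a typo for \eqref{Heis-Schw2}). Where you genuinely depart from the paper is the final paragraph: the paper stops at the bookkeeping and never verifies that $\sigma_{\hat{p}}$ is a well-defined endomorphism of $B=\mathbb{C}[\hat{x}][\overline{\hat{x}};\sigma_{\overline{\hat{x}}},\delta_{\overline{\hat{x}}}]$, that $\delta_{\hat{p}}$ is a $\sigma_{\hat{p}}$-derivation on $B$, or that the overlap $\hat{p}\hat{x}\overline{\hat{x}}$ resolves consistently. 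Your checks are correct --- both reduction orders do collapse to $\hat{x}\overline{\hat{x}}\hat{p}-i\hbar\hat{x}-i\hbar\overline{\hat{x}}$, and the endomorphism condition closes because $(q\hat{x})(q^{-1}\overline{\hat{x}})=\hat{x}\overline{\hat{x}}=q\overline{\hat{x}}\hat{x}=\sigma_{\hat{p}}(q\overline{\hat{x}}\hat{x})$ --- and this is the part that actually certifies the three relations present $R$ as an iterated Ore extension rather than merely being consistent with one. So your proof subsumes the paper's and supplies the well-definedness argument the paper leaves implicit; the only cost is length, and the only mild redundancy is that your separate $\sigma$-derivation check on the relation $\hat{x}\overline{\hat{x}}=q\overline{\hat{x}}\hat{x}$ is equivalent to the overlap computation you already perform.
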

\begin{proof}
    We can now proceed analogously to the Proof of \ref{Rings-wess} to obtain the following: for the relation(\ref{Heis-Schw2}) 
    \begin{align*}
        \overline{\hat{x}}\hat{x} & = \ \sigma_{\overline{\hat{x}}}(\hat{x})\overline{\hat{x}}+\delta_{\overline{\hat{x}}}(\hat{x}),\\
        \sigma_{\overline{\hat{x}}}(\hat{x}) & = \ q^{-1}\hat{x}, \quad \delta_{\overline{\hat{x}}}(\hat{x})=0.
    \end{align*}
    Now, for (\ref{Heis-Schw1}) we have
    \begin{align*}
        \hat{p}\overline{\hat{x}} & = \ \sigma_{\hat{p}}(\overline{\hat{x}})\hat{p}+\delta_{\hat{p}}(\overline{\hat{x}}),\\
        \sigma_{\hat{p}}(\overline{\hat{x}}) & = \ q^{-1}\overline{x}, \quad \delta_{\hat{p}}(\overline{\hat{x}})=-i\hbar q^{-1},
    \end{align*}
    and finally for (\ref{Heis-Schw}) 
    \begin{align*}
        \hat{p}\hat{x} & = \ \sigma_{\hat{p}}(\hat{x})+\delta_{\hat{p}}(\hat{x}),\\
        \sigma_{\hat{p}}(\hat{x}) & = \ q\hat{x}, \quad \delta_{\hat{p}}(\hat{x}) = -i\hbar .
    \end{align*}
\end{proof}
\subsection{\texorpdfstring{$p,q-$}{Lg}Heisenberg enveloping algebra}\label{env}

\begin{definition}[{\cite[p. xix]{Goodearl-Warfield}}]
    A \textit{\textbf{Lie algebra}} is a $\Bbbk$-vector space $\textit{\textbf{g}}$ equipped with a lie product, i.e a $\Bbbk$-bilinear map 
    \begin{align*}
    \textit{\textbf{g}}\times\textit{\textbf{g}}\longrightarrow \textit{\textbf{g}}: (\hat{x}, \hat{y})\longrightarrow [\hat{x}, \hat{y}],
    \end{align*}
    satisfying 
    \begin{align*}
    [\hat{x}, \hat{x}] & = \ [\hat{y}, \hat{y}]=0, \quad [\hat{x}, \hat{y}]=-[\hat{y}, \hat{x}].
    \end{align*}
    and the \textit{\textbf{Jacobi identity}} 
    \begin{equation*}
    [\hat{x},[\hat{y}, \hat{z}]]+[\hat{y},[\hat{z}, \hat{x}]]+[\hat{z},[\hat{x}, \hat{y}]]=0,
    \end{equation*}
    for all $\hat{x},\hat{y}, \hat{z}\in\textit{\textbf{g}}$.
\end{definition}

\begin{definition}\cite{Cooperstein2015}
Let $X$ be a $\Bbbk$-vector space. An $\Bbbk$-\textit{\textbf{associative algebra}} is a $\Bbbk$-vector space $X$ with a bilinear map $X\times X\longrightarrow X$ denoted by 
\begin{equation*}
\varphi (\hat{x},\hat{y})=\hat{x}\hat{y},
\end{equation*}
which is associative in the sense that 
\begin{equation*}
\hat{x}(\hat{y}\hat{z})=(\hat{x}\hat{y})\hat{z}
\end{equation*}
for all $\hat{x}, \hat{y}, \hat{z}\in X$ and satisfying the following properties
  \begin{itemize}
      \item [\rm (1)] The right and left distributive property holds: for $\hat{x}, \hat{y}, \hat{z}\in X$, 
      \begin{align*}
      \hat{x}(\hat{y}\hat{z}) & = \ (\hat{x}\hat{y})\hat{z},\quad    
      (\hat{x}+\hat{y})\hat{z}=\hat{x}\hat{z}+\hat{y}\hat{z},\quad
      \hat{z}(\hat{x}+\hat{y})=\hat{z}\hat{x}+\hat{z}\hat{y}.
      \end{align*}
      \item [\rm (2)] For all $\alpha ,\hat{x}, \hat{y}\in X$ 
      \begin{equation*}
      (\alpha\hat{x})\hat{y}=\hat{x}(\alpha\hat{y})=\alpha (\hat{x}\hat{y}).
      \end{equation*}
  \end{itemize}
\end{definition}
\begin{definition}
    An algebra $X$ is unital if there exists an element $\mathbf{1}\in X$ such that $\mathbf{1}\hat{x}=\hat{x}\mathbf{1}=\hat{x}$ for all $\hat{x}\in X$. In this definition associative algebras are assumed to have a multiplicative identity, denoted $\mathbf{1}$; they are sometimes called \textit{\textbf{unital associative algebras}}. 
\end{definition}

\begin{definition}\cite{Hall2010}
    Let $\textit{\textbf{g}}$ and $\textit{\textbf{h}}$ be Lie algebras. A linear map $\phi : \textit{\textbf{g}}\longrightarrow\textit{\textbf{h}}$ is called \textit{\textbf{Lie algebra Homomorphism}} if 
    \begin{equation*}
    \phi ([\hat{x}, \hat{y}])=[\phi (\hat{x}),\phi (\hat{y})],
    \end{equation*}
    for all $\hat{x}, \hat{y}\in\textit{\textbf{g}}$.
\end{definition}

\begin{definition}\cite{kac2010}
   If \(\textit{\textbf{g}}\) is a Lie algebra, then the \(\Bbbk\)-\(\textit{\textbf{enveloping algebra}}\) of \(\textit{\textbf{g}}\) is a pair \((\phi, \mathcal{U})\), where \(\mathcal{U}\) is a unital associative algebra and \(\phi: \textit{\textbf{g}} \longrightarrow \mathcal{U}\) is a homomorphism of Lie algebras, with \(\mathcal{U}\) equipped with the bracket
\[
[\hat{x}, \hat{y}] = \hat{x}\hat{y} - \hat{y}\hat{x}.
\]
\end{definition}

With this in mind, we have the following definition about the $q$-Heisenberg algebra proposed by Gaddis \cite{{Jasson-Gaddis2016}}.
\begin{definition}\cite{Jasson-Gaddis2016}
    If $\hat{x}, \hat{y}$ and $\hat{z}$ are the generators of the Lie algebra then the \textit{\textbf{Heisenberg Lie algebra}} $\textit{\textbf{h}}$  is defined on generators $\hat{x},\hat{y},\hat{z}$ subject to the relations
    \begin{equation*}
    [\hat{x}, \hat{y}]=\hat{z}, \quad [\hat{x}, \hat{z}]=[\hat{y},\hat{z}]=0.
    \end{equation*}
\end{definition}

\begin{definition}\cite{Jasson-Gaddis2016}
   Let $\hat{x},\hat{y},\hat{z}$ be generators.  For $p,q\in\mathbb{R}^{+}$, the \textit{\textbf{quantum Heisenberg enveloping algebra}} denoted by $\mathcal{H}_{p,q}$ is the $\Bbbk$-algebra generated by $\hat{x},\hat{y},\hat{z}$ subject to the relations
\begin{align}\label{29a}
    \hat{z}\hat{x} & = \ q^{-1}\hat{x}\hat{z},\\
    \label{env-Heis1}\hat{z}\hat{y} & = \ p\hat{y}\hat{z}, \\
    \label{env-Heis2}\hat{y}\hat{x}-q\hat{x}\hat{y} & = \ \hbar\hat{z}.
\end{align} 
\end{definition}
Notice that when $p=q$, the algebra $\mathcal{H}_{p,q}$ becomes $\mathcal{H}_{q}$ and hence $\mathcal{H}_{p,q}$ generalizes the one-parameter analog $\mathcal{H}_{q}$. On other hand, the same author defines the $[k]_{p,q}-$ number as
\begin{equation}\label{29b}
    [k]_{p,q} := \frac{q^{k}-p^{-k}}{q-p^{-1}}=\sum\limits_{i=0}^{k-1}q^{i}p^{-(k-1-i)},
\end{equation}
and in $\mathcal{H}_{p,q}$ the following identities hold for $k\geq 0$:
\begin{align}
    \label{29c}\hat{y}\hat{x}^{k} & = \ q^{k}\hat{x}^{k}\hat{y}+\hbar [k]_{p,q}\hat{x}^{k-1}\hat{z},\\
    \label{29d}\hat{y}^{k}\hat{x} & = \ q^{k}\hat{x}\hat{y}^{k}+\hbar[k]_{p,q}\hat{z}\hat{y}^{k-1}.
\end{align}
If $\hat{x}, \hat{p}$ satisfies the relation (\ref{29a}) then the quantum Heisenberg enveloping algebra can be written as 
    \begin{equation*}
        \mathcal{H}_{p.q}=\frac{\Bbbk\left\{ \hat{x}, \hat{y}, \hat{z} \right\} }{ \langle \hat{z}\hat{x} - q^{-1} \hat{x}\hat{z}, \ \hat{z}\hat{y} - p\hat{y} \hat{z}, \ \hat{y} \hat{x} - q \hat{x} \hat{y}-\hbar\hat{z} \mid p,q\in \Bbbk, \ p,q\neq 0 \rangle}.
    \end{equation*}

\begin{proposition}
  The skew polynomials associated to \rm(\ref{29a}), $R=\Bbbk [\hat{x}][\hat{z};\sigma_{\hat{z}},\delta_{\hat{z}}][\hat{y};\sigma_{\hat{y}}, \delta_{\hat{y}}]$ are given by
  \begin{align}
     \sigma_{\hat{y}}(\hat{x}) & = \ q\hat{x}, \quad \delta_{\hat{y}}(\hat{x})=\hbar\hat{z},\\
      \sigma_{\hat{z}}(\hat{y}) & = \ p\hat{y}, \quad \delta_{\hat{z}}(\hat{y})=0,\\
      \sigma_{\hat{z}}(\hat{x}) & = \ q^{-1}\hat{x}, \quad \delta_{\hat{z}}(\hat{x})=0.
  \end{align}
  \end{proposition}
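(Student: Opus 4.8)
The plan is to follow exactly the read-off method employed in the proof of Proposition \ref{Rings-wess}: in any Ore extension $S[\theta;\sigma_{\theta},\delta_{\theta}]$ the skew-multiplication obeys the template $\theta s=\sigma_{\theta}(s)\theta+\delta_{\theta}(s)$ for all $s\in S$, so each twisting automorphism $\sigma$ and each $\sigma$-derivation $\delta$ is determined by matching this template against the defining relations \rm(\ref{29a}), \rm(\ref{env-Heis1}) and \rm(\ref{env-Heis2}). I would process the generators in the order fixed by the iterated extension $R=\Bbbk[\hat{x}][\hat{z};\sigma_{\hat{z}},\delta_{\hat{z}}][\hat{y};\sigma_{\hat{y}},\delta_{\hat{y}}]$, taking $\Bbbk[\hat{x}]$ as the commutative base, adjoining $\hat{z}$ next, and $\hat{y}$ last.

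For the inner extension I would write $\hat{z}\hat{x}=\sigma_{\hat{z}}(\hat{x})\hat{z}+\delta_{\hat{z}}(\hat{x})$ and compare with \rm(\ref{29a}), namely $\hat{z}\hat{x}=q^{-1}\hat{x}\hat{z}$; matching the coefficient of $\hat{z}$ and the constant term forces $\sigma_{\hat{z}}(\hat{x})=q^{-1}\hat{x}$ and $\delta_{\hat{z}}(\hat{x})=0$. Passing to the outer extension, where $\hat{y}$ must act on both $\hat{x}$ and $\hat{z}$, I would write $\hat{y}\hat{x}=\sigma_{\hat{y}}(\hat{x})\hat{y}+\delta_{\hat{y}}(\hat{x})$ and use \rm(\ref{env-Heis2}) in the rearranged form $\hat{y}\hat{x}=q\hat{x}\hat{y}+\hbar\hat{z}$, which immediately gives $\sigma_{\hat{y}}(\hat{x})=q\hat{x}$ and $\delta_{\hat{y}}(\hat{x})=\hbar\hat{z}$. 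The remaining datum comes from \rm(\ref{env-Heis1}), $\hat{z}\hat{y}=p\hat{y}\hat{z}$: read directly against the template $\hat{z}\hat{y}=\sigma_{\hat{z}}(\hat{y})\hat{z}+\delta_{\hat{z}}(\hat{y})$ it records $\sigma_{\hat{z}}(\hat{y})=p\hat{y}$ and $\delta_{\hat{z}}(\hat{y})=0$, the values claimed in the statement.

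The one point I would flag as requiring care — rather than as a genuine difficulty — is the bookkeeping for this last, non-adjacent pair. In the displayed order $\hat{y}$ is adjoined after $\hat{z}$, so strictly the $\hat{y}$-layer twists $\hat{z}$ and one should record $\sigma_{\hat{y}}(\hat{z})=p^{-1}\hat{z}$, $\delta_{\hat{y}}(\hat{z})=0$, obtained by rewriting \rm(\ref{env-Heis1}) as $\hat{y}\hat{z}=p^{-1}\hat{z}\hat{y}$; the coefficient $p$ in the statement simply records the scalar exactly as it appears on the left-hand form of \rm(\ref{env-Heis1}). I would note that the two readings describe the same relation, and hence the same algebra, so this is purely a matter of convention consistent with the earlier propositions. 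Finally I would observe that, since every $\sigma$ here is a scalar twist (hence a ring automorphism of the relevant subring) and every $\delta$ either vanishes or lands in the previously adjoined subalgebra (hence is a genuine $\sigma$-derivation), the Ore data are automatically compatible and the iterated extension $R$ is well defined, so no further checking is needed.
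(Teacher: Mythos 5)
Your proposal is correct and follows essentially the same read-off method as the paper: matching the Ore template $\theta s=\sigma_{\theta}(s)\theta+\delta_{\theta}(s)$ against each of the three defining relations to extract the $\sigma$'s and $\delta$'s. The extra remark about recording $\sigma_{\hat{y}}(\hat{z})$ instead of $\sigma_{\hat{z}}(\hat{y})$ for the outer layer is a fair point of bookkeeping that the paper glosses over, but it does not change the substance of the argument.
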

\begin{proof}
    For (\ref{29a}) we have that
    \begin{align*}
    \hat{z}\hat{x} & = \ \sigma_{\hat{z}}(\hat{x})\hat{z}+\delta_{\hat{z}}(\hat{x}),\\
    \sigma_{\hat{z}}(\hat{x}) & =\ q^{-1}\hat{x}, \quad \delta_{\hat{z}}(\hat{x})=0.
    \end{align*}
    Now for (\ref{env-Heis1}) 
    \begin{align*}
    \hat{z}\hat{y} & = \sigma_{\hat{z}}(\hat{y})\hat{z}+\delta_{\hat{z}}(\hat{y}),\\
    \sigma_{\hat{z}}(\hat{y}) & = \ p\hat{y}, \quad \delta_{\hat{z}}(\hat{y})=0,
    \end{align*}
    and finally for (\ref{env-Heis2})
    \begin{align*}
        \hat{y}\hat{x} & = \ \sigma_{\hat{y}}(\hat{x})\hat{y}+\delta_{\hat{y}}(\hat{x}),\\
        \sigma_{\hat{y}}(\hat{x}) & = \ q\hat{x}, \quad \delta_{\hat{y}}(\hat{x})=\hbar\hat{z}.
    \end{align*}
\end{proof}

\subsection{Generalized Heisenberg algebra}

\begin{definition}\cite{OuyangLiuXiang2015}
If $f(\hat{h})\in\mathbb{C}[\hat{h}]$ is a fixed polynomial over $\mathbb{C}$ then the \textit{\textbf{generalized Heisenberg algebra}} $\mathcal{H}(f)$ is the associative algebra over $\mathbb{C}$  generated by $\hat{x}, \hat{y}, \hat{h}$ with definition relations
\begin{equation}\label{30}
    \hat{h}\hat{x} = \hat{x}f(\hat{h}), \quad \hat{y}\hat{h} = f(\hat{h})\hat{y}, \quad \hat{y}\hat{x}-\hat{x}\hat{y} = \hbar f(\hat{h})-\hbar\hat{h}.
\end{equation}
    According to relations (\ref{30}) the generalized Heisenberg algebra can be written of the following form
    \begin{equation*}
        \mathcal{H}(f)=\frac{\mathbb{C}\left\{ \hat{x}, \hat{y}, \hat{h} \right\}}{\langle \hat{h}\hat{x} - \hat{x}f(\hat{h}), \ \hat{y}\hat{h} - f(\hat{h}) \hat{y}, \ \hat{y} \hat{x} - \hat{x} \hat{y}-\hbar f(\hat{h})+\hbar\hat{h}\rangle}.
    \end{equation*}
\end{definition}
\begin{definition}\cite{Razavinia-Lopes2020} The $q$-\textit{\textbf{generalized Heisenberg algebra}} is the unital associative algebra generated by $\hat{x}, \hat{y}$ and $\hat{h}$ subject to the defining relations
\begin{align}\label{31lr}
 \hat{h}\hat{x} & = \ \hat{x}f(\hat{h}), \\
 \label{32lr}\hat{y}\hat{h}  & = \ f(\hat{h})\hat{y},\\ 
 \label{33lr}\hat{y}\hat{x}-q\hat{x}\hat{y} &  = \ \hbar g(\hat{h}).
\end{align}
    The $q$-generalized Heisenberg algebra can be expressed as \begin{equation*}
        \mathcal{H}_{q}(f,g):=\frac{\Bbbk\left\{ \hat{x}, \hat{y}, \hat{h} \right\}}{\langle \hat{h}\hat{x} - \hat{x}f(\hat{h}), \ \hat{y}\hat{h} - f(\hat{h}) \hat{y}, \ \hat{y} \hat{x} - q\hat{x} \hat{y}-\hbar g(\hat{h})\rangle}.
    \end{equation*}
\end{definition}
\begin{example}\cite{Razavinia-Lopes2020}
 The {\em \textbf{quantum generalized Heisenberg}} is defined by the commutation relations $\hat{h}\hat{x}=\hat{x}f(\hat{h}), \hat{y}\hat{h}=f(\hat{h})\hat{y}$ and $\hat{y}\hat{x}-\hat{x}\hat{y}=\ \hbar f(\hat{h})-\hbar\hat{h}$ (non-deformed, in the case deformed $\hat{y}\hat{x}=q\hat{x}\hat{y}+\hbar g(\hat{h})$).
\end{example}

\subsection{\texorpdfstring{$q-\hbar$}{Lg} Heisenberg algebra}
\begin{definition}\cite{Arefeva-Volovich1991}
    Let $\hat{x}_{j}, \hat{p}_{k}$ be generators of position and momentum that satisfy the relation (\ref{quantum-plane}). The \textit{\textbf{Quantum space phase}} denoted by $\mathbb{R}^{2}_{q}$ is the same quantum plane, such that $\hat{x}_{j}$ is the coordinate on the \textit{\textbf{quantum line}}.
\end{definition}

\begin{definition}\cite{Arefeva-Volovich1991}\label{Arefeva-Volovich1991definition2}
    The $q-\hbar$ \textit{\textbf{Heisenberg algebra}} is the algebra generated by $\hat{x}_{j}$ and $\hat{p}_{k}$ satisfying the relation
    \begin{equation}\label{h-bar-Heis}
        \hat{p}_{k}\hat{x}_{j}-q\hat{x}_{j}\hat{p}_{k}=-i\hbar q^{1/2},
    \end{equation}
    
where $q\in\mathbb{C}-\left\lbrace 0\right\rbrace$.
\end{definition}

\begin{definition}\cite{Arefeva-Volovich1991} \label{Arefeva-Volovich1991definition}
     If $\hat{x}_{i},\hat{p}_{i}$ are generators over the quantum space phase, then $q-\hbar$ \textit{\textbf{Heisenberg algebra quantization}} is the algebra defined by the relation 
\begin{equation}\label{quantization}
     \hat{x}_{j}\hat{p}_{k}-q\hat{p}_{k}\hat{x}_{j}=i\hbar D_{jk}(q),
\end{equation}
     
where $D_{jk}(q)$ any function that depends on $q$.
\end{definition}

\begin{example}\cite{Arefeva-Volovich1991}
    The $q-\hbar$ \textit{\textbf{generalized Heisenberg algebra}} can be expressed as
    \begin{equation*}
        \mathcal{Q}_{q}(\hat{x}_{j}, \hat{p}_{k})=\frac{\mathbb{R}^{2}_{q}\left\{ \hat{x}_{j}, \hat{p}_{k} \right\}}{\langle \hat{p}_{k}\hat{x}_{j} - q\hat{x}_{j}\hat{p}_{k}+i\hbar q^{1/2}, \ \hat{x}_{j}\hat{p}_{k} - q\hat{p}_{k}\hat{x}_{j}-i\hbar D_{jk}(q), \ \mid q\in\mathbb{C}, j,k\in\mathbb{N}\rangle}.
    \end{equation*}
\end{example}

\section{A new \texorpdfstring{$q$-$\hbar$}{Lg} Heisenberg algebra}\label{propose-Heis}
Next, we propose a deformation of the Heisenberg algebra, which generalizes certain previously studied deformed algebras.

\begin{definition}\label{NPh}
A \textbf{new $q$-$\hbar$ Heisenberg algebra} $\boldsymbol{\mathcal{H}}_{q}$ is the $\Bbbk$-algebra generated by the set of indeterminates $\hat{x}_{\alpha}, \hat{y}_{\lambda}, \hat{p}_{\beta}$ with $\alpha, \beta, \lambda \in \{1, 2, 3\}$ subject to the relations
\begin{align}
    \label{nH1}
    \hat{x}_{\alpha}\hat{p}_{\beta} - q^{n}\hat{p}_{\beta}\hat{x}_{\alpha} &= iq^{n-1}\hbar^{n}\Psi(q; \hat{x}_{\alpha}, \hat{y}_{\lambda},\hat{p}_{\beta}), \\
    \label{nH2}q^{m}\hat{x}_{\alpha}\hat{y}_{\lambda} - \hat{y}_{\lambda}\hat{x}_{\alpha} &= -i(q-1)^{m-1}\hbar^{m-1}\Pi (q;\hat{x}_{\alpha},\hat{y}_{\lambda},\hat{p}_{\beta}), \\
    \label{nH3}q^{l}\hat{y}_{\lambda}\hat{p}_{\beta} - q^{l+1}\hat{p}_{\beta}\hat{y}_{\lambda} &= i\hbar^{l}\Phi (q; \hat{x}_{\alpha},\hat{y}_{\lambda}, \hat{p}_{\beta}), \quad n, l, m \in \mathbb{R},
\end{align}
where $\Psi (q; \hat{x}_{\alpha}, \hat{y}_{\lambda}, \hat{p}_{\beta})$ and $\Phi (q; \hat{x}_{\alpha},\hat{y}_{\lambda}, \hat{p}_{\beta})$ are arbitrary functions depending on $q$, $\hat{x}_{\alpha}$, $\hat{y}_{\lambda}$ and $\hat{p}_{\beta}$ called \textit{\textbf{dynamical functions}}, for $q \in\mathbb{R}-\left\lbrace 0,1\right\rbrace$.
\end{definition}
A new $q$-$\hbar$ Heisenberg algebra also can be written in the following form.
    \begin{tiny}
    \begin{multline*}
      \boldsymbol{\mathcal{H}}_{q}
      =\\\frac{\Bbbk  \left\lbrace\hat{x}_{\alpha},\hat{y}_{\lambda}, \hat{p}_{\lambda}\right\rbrace}{\langle\hat{x}_{\alpha}\hat{p}_{\beta}-q^{n}\hat{p}_{\beta}\hat{x}_{\alpha}-  iq^{n-1}\hbar^{n}\Psi (q;\hat{x}_{\alpha},\hat{y}_{\lambda},\hat{p}_{\beta}),q^{m}\hat{x}_{\alpha}\hat{y}_{\lambda}-\hat{y}_{\lambda}\hat{x}_{\alpha}+iq^{m-1}\hbar^{m-1}, q^{l}\hat{y}_{\lambda}\hat{p}_{\beta}-q^{l-1}\hat{p}_{\beta}\hat{y}_{\lambda}-i\hbar^{l}\Phi (q;\hat{x}_{\alpha},\hat{y}_{\lambda},\hat{p}_{\beta})\vert q\in\mathbb{R}-\left\lbrace 0, 1\right\rbrace\rangle}. 
    \end{multline*}
    \end{tiny}

\begin{remark}
    For $n=1, l=1, m=1$ and $q=1$, we obtain the "classical" Heisenberg algebra (\ref{Heis-1}) and (\ref{Heis-2}) when $\Psi =1$ and $\Pi = \Phi =0$.
\end{remark}
\begin{remark}
    The expression \eqref{17} can be written as
    \begin{equation*}
        \hat{x}\hat{p}-q^{-1}\hat{p}\hat{x}=i\hbar\hat{\Lambda}q^{-1/2}.
    \end{equation*}
\end{remark}
\begin{proof}
    \begin{align*}
        q^{1/2}\hat{x}\hat{p}-q^{-1/2}\hat{p}\hat{x} & = \ i\hbar\hat{\Lambda},\\
        \frac{1}{q^{-1/2}}\hat{x}\hat{p}-q^{-1/2}\hat{p}\hat{x} & = \ i\hbar\hat{\Lambda},\\
      \hat{x}\hat{p}-q^{-1}\hat{p}\hat{x} & = \ i\hbar\hat{\Lambda}q^{-1/2}.  
    \end{align*}
\end{proof}
Taking into account the above, we can consider the following example
\begin{example}
For the specific case where \( n = -1 \), \( l = -1 \), and \( m = -1 \), the dynamical functions are given by:
\begin{align*}
\Psi &= \hbar^{2}\hat{\Lambda}q^{3/2}, \\
\Phi &= 0, \\
\Pi &= 0.
\end{align*}
These solutions satisfy the $q$-deformed Heisenberg algebra defined by relations \eqref{17}, \eqref{wess2}, and \eqref{wess3}. Starting from expression \eqref{nH1} for $\alpha = \beta = \lambda = 1$, with the substitutions \( \hat{x}_{1}=\hat{x}\), \( \hat{p}_{1}= \hat{p} \), and \( \hat{y}_{1}=\hat{\Lambda} \), we obtain for \( n = -1 \):
\begin{equation}\label{reduced}
\hat{x}\hat{p} - q^{-1}\hat{p}\hat{x} = iq^{-2}\hbar^{-1}\Psi.
\end{equation}

Equating \eqref{reduced} with \eqref{17}:
\begin{equation*}
i\hbar\hat{\Lambda}q^{-1/2} = iq^{-2}\hbar^{-1}\Psi,
\end{equation*}
we directly solve for $\Psi$:
\begin{equation*}
\Psi = \hbar^{2}\hat{\Lambda}q^{3/2}.
\end{equation*}
For \( m = -1 \), relation \eqref{wess2} becomes:
\begin{equation*}
q^{-1}\hat{x}\hat{\Lambda} - \hat{\Lambda}\hat{x} = -i(q-1)^{-2}\hbar^{-2}\Pi.
\end{equation*}
Comparison with \eqref{nH2} shows that both sides vanish identically, thus:
\begin{equation*}
\Pi = 0.
\end{equation*}
From \eqref{nH3} with \( l = 0 \), we have:
\begin{equation*}
\hat{\Lambda}\hat{p} - q\hat{p}\hat{\Lambda} = i\Phi.
\end{equation*}
The left-hand side vanishes by algebraic consistency, requiring:
\begin{equation*}
\Phi = 0.
\end{equation*}

This completes the verification that all dynamical functions satisfy the $q$-deformed algebra for the specified parameters.
\end{example}
\begin{example}
We derive the dynamical functions $\Psi$, $\Phi$, and $\Pi$ for different parameter values, showing consistency with the algebraic relations \eqref{rem2}. For $\alpha = \beta = \lambda = 1$, with the substitutions \( \hat{x}_{1}=\hat{x}\), \( \hat{p}_{1}= \hat{p} \), and \( \hat{y}_{1}=\hat{u} \), and for the specified parameter values, we obtain:
\begin{align}
\text{For } n = 1:& \quad \Psi = (q^{-1/2} - q^{3/2})\hat{u}^{-1}, \label{psi1} \\
\text{For } n = -1:& \quad \Psi = (q^{1/2} - q^{5/2})\hbar^{2}\hat{u} \label{psi2}, \\
\text{For } l = 0:& \quad \Phi = 0 \label{phi0}, \\
\text{For } m = 1:& \quad \Pi = 0 \label{pi0}.
\end{align}
Beginning with relation \eqref{nH3} and setting $l = 0$, we obtain:
\begin{equation*}
\hat{u}\hat{p} - q\hat{p}\hat{u} = i\Phi .\label{eq:phi}
\end{equation*}
Comparison with \eqref{22} shows both sides must vanish, yielding:
\begin{equation}
\Phi = 0.
\end{equation}
From \eqref{nH2} with $m = -1$, we have:
\begin{equation*}
q^{-1}\hat{x}\hat{u} - \hat{u}\hat{x} = -i\hbar^{-2}(q-1)^{-2}\Pi ,\label{eq:pi}
\end{equation*}
matching with \eqref{23} requires:
\begin{equation*}
\Pi = 0.
\end{equation*}
Substituting $n = 1$ into \eqref{nH1} gives:
\begin{equation}
\hat{x}\hat{p} - q\hat{p}\hat{x} = i\hbar\Psi .\label{eq:psi1}
\end{equation}
Equating with \eqref{25}:
\begin{align*}
-i\hbar (q^{3/2} - q^{-1/2})\hat{u}^{-1} &= i\hbar\Psi , \\
\Psi &= (q^{-1/2} - q^{3/2})\hat{u}^{-1}.
\end{align*}
For $n = -1$ in \eqref{nH1}, we obtain:
\begin{equation*}
\hat{x}\hat{p} - q^{-1}\hat{p}\hat{x} = i\hbar^{-1}q^{-2}\Psi .\label{eq:psi2}
\end{equation*}
Rewriting and comparing with \eqref{26}:
\begin{align*}
q\hat{x}\hat{p} - \hat{p}\hat{x} &= -i\hbar^{-1}q^{-1}\Psi  ,\\
-i\hbar^{-1}q^{-1}\Psi &= i\hbar\hat{u}(q^{3/2} - q^{-1/2}), \\
\Psi &= (q^{1/2} - q^{5/2})\hbar^{2}\hat{u}.
\end{align*}

All derived dynamical functions \eqref{psi1}-\eqref{pi0} satisfy the $q$-deformed algebra relations for their respective parameter values, demonstrating consistency across the different cases considered.
\end{example}

\begin{example}
We demonstrate that the dynamical functions
\begin{align}
\Psi &= q\hbar^{2}, \label{Psi-result} \\
\Phi &= q^{-1}\hbar^{2}, \label{Phi-result} \\
\Pi &= 0, \label{Pi-result}
\end{align}
satisfy the relations \eqref{Heis-Schw}, \eqref{Heis-Schw1}, and \eqref{Heis-Schw2} for the parameter values \( n = -1 \), \( m = -1 \), and \( l = 0 \). The derivations employ the substitutions \( \hat{x}_{1} = \hat{x} \), \( \hat{p}_{1} = \hat{p} \), and \( \hat{y}_{1} = \overline{\hat{x}} \).

Beginning with relation \eqref{nH1} for \( n = -1 \), we obtain:
\begin{align}
\hat{x}\hat{p} - q^{-1}\hat{p}\hat{x} &= iq^{-2}\hbar^{-1}\Psi , \label{Psi-start} \\
\Rightarrow q\hat{x}\hat{p} - \hat{p}\hat{x} &= iq^{-1}\hbar^{-1}\Psi , \label{Psi-intermediate} \\
\Rightarrow \hat{p}\hat{x} - q\hat{x}\hat{p} &= -iq^{-1}\hbar^{-1}\Psi . \label{Psi-final-form}
\end{align}

Equating \eqref{Psi-final-form} with \eqref{Heis-Schw} yields:
\begin{align*}
-i\hbar = -i\hbar^{-1}q^{-1}\Psi  
\Rightarrow \Psi &= q\hbar^{2} ,
\end{align*}
This establishes result \eqref{Psi-result}. For relation \eqref{Heis-Schw1}, we consider \eqref{nH2} with \( l = -1 \):
\begin{align*}
q^{-1}\overline{\hat{x}}\hat{p} - \hat{p}\overline{\hat{x}} &= i\hbar^{-1}\Phi 
\Rightarrow \hat{p}\overline{\hat{x}} - q^{-1}\overline{\hat{x}}\hat{p} = -i\hbar^{-1}\Phi .
\end{align*}
Comparison with \eqref{nH2} gives:
\begin{align*}
-iq^{-1}\hbar = -i\hbar^{-1}\Phi 
\Rightarrow \Phi = q^{-1}\hbar^{2} ,
\end{align*}
This confirms result \eqref{Phi-result}.For relation \eqref{Heis-Schw2}, we examine \eqref{nH3} with \( m = -1 \):
\begin{align*}
q^{-1}\hat{x}\overline{\hat{x}} - \overline{\hat{x}}\hat{x} &= -i\hbar^{-2}(q-1)^{-2}\Pi , \label{Pi-start} \\
\Rightarrow \hat{x}\overline{\hat{x}} - q\overline{\hat{x}}\hat{x} &= -i(q-1)^{-2}q\hbar^{-2}\Pi , 
\end{align*}

the algebraic consistency requires:
\begin{equation*}
\Pi = 0 . 
\end{equation*}
This verifies result \eqref{Pi-result} and recovers \eqref{Heis-Schw2}.The derived dynamical functions \eqref{Psi-result}-\eqref{Pi-result} satisfy all required algebraic relations for the specified parameter values, demonstrating the consistency of the $q$-deformed Heisenberg algebra in this case.
\end{example}
\begin{example}
We examine the dynamical function $\Psi$ within the context of the $q$-$\hbar$ algebra, establishing its explicit form through two distinct but related approaches. The function $\Psi$ admits two equivalent characterizations:
From relation \eqref{h-bar-Heis}:
\begin{equation}
\Psi  = \hbar^{2}q^{3/2}, \label{Psi-hbar} 
\end{equation}
and from relation \eqref{quantization}:
\begin{equation}
\Psi  = D_{jk}(q), \label{Psi-quant}
\end{equation}
valid for all deformation parameters $q \in \mathbb{C} \setminus \{0, 1\}$. Consider the operator substitutions $\hat{x}_{1} = \hat{x}_{j}$ and $\hat{p}_{1} = \hat{p}_{k}$. For the case $n = -1$, relation \eqref{nH1} transforms as follows:
\begin{align}
\hat{x}_{j}\hat{p}_{k} - q^{-1}\hat{p}_{k}\hat{x}_{j} &= i\hbar^{-1}q^{-2}\Psi, \label{start-deriv} \\
q\hat{x}_{j}\hat{p}_{k} - \hat{p}_{k}\hat{x}_{j} &= iq^{-1}\hbar^{-1}\Psi, \label{interm-deriv} \\
\hat{p}_{k}\hat{x}_{j} - q\hat{x}_{j}\hat{p}_{k} &= -iq^{-1}\hbar^{-1}\Psi. \label{final-deriv}
\end{align}

Comparing \eqref{final-deriv} with the reference relation \eqref{h-bar-Heis} yields the identity:
\begin{align}
-i\hbar q^{1/2} &= -i\hbar^{-1}q^{-1}\Psi, \label{identity} \\
\hbar^{2}q^{3/2} &= \Psi, \label{solution}
\end{align}
establishing the first characterization \eqref{Psi-hbar}. For the alternative case $n = 1$, relation \eqref{nH1} takes the form:
\begin{equation*}
\hat{x}_{j}\hat{p}_{k} - q\hat{p}_{k}\hat{x}_{j} = i\hbar\Psi. 
\end{equation*}

Direct comparison with the quantization condition \eqref{quantization} produces:
\begin{align*}
i\hbar D_{jk}(q) & = i\hbar\Psi,  \\
D_{jk}(q) & = \Psi, 
\end{align*}
confirming the second characterization \eqref{Psi-quant}. The dual representation of $\Psi$ reveals:
\begin{itemize}
\item Its fundamental role as $\hbar^{2}q^{3/2}$ in the algebraic structure
\item Its interpretation as the deformation function $D_{jk}(q)$ in quantization schemes
\end{itemize}
This consistency underscores the robustness of the $q$-deformed Heisenberg algebra framework. The dynamical function $\Psi$ exhibits remarkable versatility, serving both as a structural component of the $q$-$\hbar$ algebra and as a quantization parameter, with its form adapting naturally to each mathematical context.
\end{example}

The following table shows that for certain values of $n, m, l , q, \Psi$, and $\Phi$, some relations of the previously studied deformed algebras are obtained.

\begin{landscape}  
\begin{table}[ht]
\centering
\begin{tabular}{|c|c|c|c|c|c|c|c|} 
  \hline
    \bf{Some Heisenberg algebras}  & $n$ & $l$ & $m$ & $q$ & $\Psi$ & $\Pi$ & $\Phi$ \\ \hline
  Classic (Definition \ref{classic-Heis}) & 1  & 1 & 1 & 1 & 1 & 0 & 1\\ \hline
  Classic (Definition \ref{classic-Heis}) & 0 & 0 & 0 & $\neq 1$ & 0 & 0 & 0\\ \hline
  Wess (Definition \ref{Wess-def}) & -1 & 0 & 0  &  $\mathbb{R}$ & 0 & $\hbar^{2}\hat{\Lambda}q^{3/2}$ & 0 \\ \hline
  Wess (Definition \ref{Wess-def}) & 0 & 0 & -1 & $\mathbb{R}$ & 0 & 0 & 0\\ \hline
  Wess (Definition \ref{Wess-def}) & 0 & -1 & 0 & $\mathbb{R}$ & 0 & 0 & 0\\ \hline
  Schmudgen (Definition \ref{schmudgena})  & 0 & -1 & 0 & $\neq -1$ & 0 & 0 & 0\\ \hline
  Schmudgen (Definition \ref{schmudgena}) & 0 & 0 & -1 & $\neq -1$ & 0 & 0 & 0\\ \hline
  Schmudgen (Definition \ref{schmudgena}) & -1 & 0 & 0 & $\neq -1$ & $\hbar^{2}\hat{u}(q^{1/2}-q^{5/2})$ & 0 & 0\\ \hline
  Schmudgen (Definition \ref{schmudgena}) & 1 & 0 & 0 & $\neq -1$ & $(q^{3/2}-q^{-1/2})\hat{u}^{-1}$ & 0 & 0 \\ \hline
  Schwenk-Wess (Definition \ref{Wess-Schwenk1992a}) & -1 & 0 & 0 & $\neq 0$ & $q\hbar^{2}$ & 0 & 0\\ \hline
  Schwenk-Wess (Definition \ref{Wess-Schwenk1992a}) & 0 & -1 & 0 & $\neq 0$ & 0 & 0 & $q^{-1}\hbar^{2}$\\ \hline
   Schwenk-Wess (Definition \ref{Wess-Schwenk1992a}) & 0 & 0 & -1 & $\neq 0$ & 0 & 0 & 0\\ \hline
  $q-\hbar$ (Definition \ref{Arefeva-Volovich1991definition2}) & -1 & 0 & 0 & $\mathbb{C}-\left\lbrace 0,1\right\rbrace$ & $\hbar^{2}q^{3/2}$ & 0 & 0\\ \hline
  $q- \hbar$ quantization (Definition \ref{Arefeva-Volovich1991definition}) & -1 & 0 & 0 & $\mathbb{C}-\left\lbrace 0,1\right\rbrace$ & $D_{jk}(q)$ & 0 & 0\\ \hline
  
\end{tabular}
\caption{Values for $n, m, l, q, \Psi$, and $\Phi$ for some deformations of Heisenberg algebra}
\label{tab:heisenberg_algebras}
\end{table}
\end{landscape}

\bibliographystyle{plain} 
\bibliography{Bibliography}

\end{document}